\documentclass[aps,pra,showpacs,twoside,twocolumn,longbibliography,10pt,nofootinbib]{revtex4-2}
\usepackage[colorlinks=true, citecolor=blue, urlcolor=blue, linkcolor = blue ]{hyperref}
\usepackage{epsfig,newlfont,amssymb,amsfonts,amsmath,bm,subfigure,mathtools,amsthm,braket,times,soul,enumitem,color}
\usepackage[normalem]{ulem}
\newcommand{\stkout}[1]{\ifmmode\text{\sout{\ensuremath{#1}}}\else\sout{#1}\fi}
\usepackage[english]{babel}
\usepackage[utf8]{inputenc}
\usepackage{array}
\usepackage{xcolor}
\usepackage{graphics}

\newtheorem{proposition}{Proposition}

\newtheorem{lemma}{Lemma}

\newcommand{\ketbra}[2]{|#1\rangle \langle #2|}
\def\Tr{\text{Tr}}

\usepackage{amsthm}
\usepackage{verbatim}
\usepackage{bbm}
\usepackage{wrapfig}

\usepackage{hyphenat}

\newlength\figureheight 
\newlength\figurewidth 
\begin{document}

\title{Surpassing thermal-state limit in thermometry via non-completely positive quantum encoding} 


\author{Anindita Sarkar}

\author{Paranjoy Chaki$^{*}$} 

\author{Debarupa Saha$^{*}$} 
\author{Ujjwal Sen}

\affiliation{Harish-Chandra Research Institute, Chhatnag Road, Jhunsi, Prayagraj  211 019, India}
 \affiliation{Homi Bhabha National Institute, Training School Complex, Anushakti Nagar, Mumbai 400 094, India}

 \thanks{These authors contributed equally to this work.}

\begin{abstract}
Conventional quantum thermometry assumes completely positive (CP) encoding maps, where the probe is initially uncorrelated with the environment. We consider realistic scenarios with initial probe–environment correlations leading to physically realizable non-completely positive (NCP) encoding, and show how such encodings can significantly impact temperature estimation of the environment. We first consider pure entangled probe-environment initial states (Type-I NCP encoding) and analytically show that for probes and environments of equal but arbitrary dimension, the maximum achievable precision matches the thermal-state bound, as in the CP case. However, upon relaxing the constraint of pure probe-environment states and considering general correlated initial states (Type-II NCP encoding), we demonstrate that the estimation precision can surpass the thermal-state limit. This establishes a clear advantage of NCP encoding in enhancing thermometric performance. We illustrate the results using 
qubit probes interacting with qubit environments via XY interactions.

\end{abstract}
\maketitle
\section{Introduction}

Quantum thermometry~\cite{PhysRevA.82.011611,PhysRevA.84.032105,PhysRevA.91.012331,Correa2015-thermometry,PhysRevA.92.032112,Paris2016-thermometry,Xie2017,Campbell_2017,PhysRevA.96.062103,DePasquale2018-thermometry,PhysRevA.97.032129,PhysRevA.98.042124,PhysRevA.98.050101,Razavian2019,Mehboudi2019-thermometry,Potts2019-thermometry,PhysRevLett.123.180602,PhysRevA.101.032112,PhysRevA.102.012204,Jorgensen2020-thermometry,PhysRevA.102.042417,Mok2021,Hovhannisyan2021-thermometry,PhysRevResearch.3.043039,Rubio2021-thermometry,PhysRevLett.128.040502,Cenni2022thermometryof,Aybar2022criticalquantum,PRXQuantum.3.040330,PhysRevA.107.042614,PhysRevA.108.032220,PhysRevResearch.5.043184,zhang2024achievingheisenbergscalinglowtemperature,Sone_2024,PhysRevA.110.032605,PhysRevA.111.052216,t74h-c8kw,gsh7-r7ms,PhysRevA.102.042417,62ks-19fs,PhysRevResearch.5.043184,PhysRevA.107.042614,prs2-jcxj,xie2026directtemperaturereadoutnonequilibrium,saha2026precisionenhancementtransientquantum} 
deals with estimating the temperature of a quantum system (the environment), which is typically not directly accessible, by employing an auxiliary quantum system (the probe) on which measurements can be performed. 
It plays a central role in the characterization of quantum thermodynamic processes
~\cite{RevModPhys.81.1}, in the control of quantum thermal devices and also have applications in quantum simulation~\cite{RevModPhys.80.885,RevModPhys.86.153}. Experimentally, quantum thermometry has been realized across a variety of physical platforms~\cite{doi:10.1073/pnas.1306825110,Neumann2013,10.1063/1.4823703,10.1063/5.0044824,PhysRevApplied.23.054079}.

A conventional thermometry protocol consists of three fundamental steps: preparation of an appropriate probe state, encoding of the temperature of the environment into the probe via an interaction process, and subsequent measurement of the evolved probe state. 
Temperature estimation is most commonly performed within the framework of quantum metrology~\cite{Helstrom1969,HOLEVO1973337,PhysRevLett.72.3439,BRAUNSTEIN1996135,doi:10.1126/science.1104149,PhysRevLett.94.020502,PhysRevLett.96.010401,doi:10.1142/S0219749909004839,Giovannetti2011,holevo,RevModPhys.89.035002,RevModPhys.90.035005,RevModPhys.90.035006,Mukhopadhyay_2025,doi:10.1142/S0129183125430065}, although alternative approaches also exist~\cite{PhysRevLett.119.090603,ZHANG2019113635,HUANGFU2021127172}. 
Within this framework, the estimation precision is quantified by the quantum Fisher information (QFI)~\cite{PhysRevLett.72.3439,BRAUNSTEIN1996135}, which sets the ultimate precision bound through the quantum Cram\'er-Rao inequality~\cite{crap1,PhysRevD.23.357,PhysRevLett.72.3439,holevo}. 
A larger value of the QFI therefore corresponds to a higher achievable precision and serves as a fundamental figure of merit for thermometric protocols.

Most studies in quantum thermometry assume that the probe is initially uncorrelated with the environment prior to the encoding step, such that the resulting dynamics are described by a completely positive trace-preserving (CPTP) map~\cite{k,Alicki_Lendi,preskill1998}. 
Such dynamics arise from the application of a global unitary on the product probe--environment state, followed by tracing out the environment.

However, in realistic scenarios, the probe may be initially correlated with the environment. In such cases, the reduced dynamics of the probe need not be completely positive, and the corresponding encoding processes are described by non-completely positive trace-preserving (NCPTP) maps~\cite{PhysRevLett.73.1060,PhysRevA.64.062106,PhysRevA.70.052110,ziman2006quantumprocesstomographyrole,PhysRevA.77.042113,R2008,wood2009noncompletelypositivemapsproperties}. 
Non-completely positive maps and non-positive measurements have been shown to be advantageous in various quantum information tasks, including entanglement detection~\cite{HORODECKI19961}, state discrimination~\cite{Arai_2019,PhysRevLett.125.150402,choudhary2026enhancedquantumstatediscrimination}, quantum metrology~\cite{chaki2025nonpositivemeasurementsarentbeneficial}, discrimination-based quantum games~\cite{Regula2021operational}, and energy extraction from quantum batteries~\cite{PhysRevLett.132.240401,chaki2025positivenonpositivemeasurementsenergy}. 
Motivated by these observations, we investigate whether NCPTP encodings can enhance thermometric precision beyond that achievable from the thermal state of the environment.

To this end, we consider two classes of physically realizable NCPTP encodings: type-I and type-II, corresponding to pure entangled and general correlated initial probe-environment states, respectively. 
In both cases, the marginal state of the environment is fixed to be a thermal state at the temperature to be estimated, ensuring a fair comparison with CPTP encoding.

We show analytically that, for probes and environments of equal (but otherwise arbitrary) dimension, the maximum QFI achievable under type-I NCPTP encoding does not exceed that of the thermal state of the environment, a scenario identical to that of CPTP encoding.
However, for a qubit probe interacting with a qubit environment, we demonstrate numerically that type-II NCPTP encoding can surpass this bound when optimized over general two-qubit initial states and global unitaries.
The enhancement is observed for both arbitrary two-qubit unitaries and physically relevant energy-conserving interactions. This indicates that, although CPTP protocols already achieve the maximum precision attainable within the type-I setting, the additional freedom provided by general correlated initial states in type-II encoding can lead to a genuine precision advantage.

Our results thus identify scenarios in which realistic initial correlations can be harnessed to enhance thermometric performance. 
In particular, unlike previous works~\cite{Sekatski2022optimal,PhysRevA.108.062421,e26070568,PhysRevE.110.024132,trombetti2025nonequilibriumthermometryensembleinitially,prs2-jcxj,saha2026precisionenhancementtransientquantum}, where non-equilibrium enhancement arises within the CPTP framework relative to the probe's thermal state, we demonstrate an enhancement relative to the thermal-state precision of the environment itself. 
This highlights the potential of NCPTP encodings in the design of more sensitive quantum thermometers.

We note that the role of initial probe-environment correlations in thermometry has also been investigated in Refs.~\cite{PhysRevA.104.012211,Zhang_2024,Mirza_2024}. 
In those works, correlations are first consumed via a measurement-based preparation step, resulting in an effective CPTP encoding. 
In contrast, our approach directly exploits the correlated probe-environment state during the encoding process, leading to genuinely NCPTP dynamics.

The rest of the paper is organized as follows. Sec.~\ref{prelim} is devoted to a detailed discussion of quantum thermometry within the framework of parameter estimation theory, along with a brief outline of non-completely positive trace-preserving maps. In Sec.~\ref{th-gen}, we study the problem of temperature estimation of a thermal environment, using CPTP encoding, as well as type-I and type-II NCPTP encodings. We analytically establish the existence of an upper bound on the optimal precision for the CPTP and type-I NCPTP encodings, given by the quantum Fisher information of the thermal environment. Next, we numerically demonstrate, for qubit probe and environment, that the type-II NCPTP encoding can surpass this bound by optimizing the quantum Fisher information over general two-qubit correlated initial states, considering arbitrary two-qubit unitaries in Sec.~\ref{gen} and energy-preserving unitaries in Sec.~\ref{en}. We further show that the maximum quantum Fisher information, obtained by optimizing over initial probe states in the CPTP encoding and over single-qubit local unitaries acting on the probe in the type-I NCPTP encoding, attains the thermal-state value in both cases. We present the conclusion in Sec.~\ref{conc}.
\section{Preliminaries}\label{prelim}
In this section, we cover the essential background needed in this work.  Subsection~\ref{qther} outlines the basics of quantum thermometry, while subsection~\ref{ncptp} contains a brief description of NCPTP maps.  
\subsection{Quantum Thermometry}\label{qther}


We elucidate the basic scheme of temperature estimation, using the techniques of quantum parameter estimation theory. We consider a sample maintained at temperature \(T\). To estimate its temperature, an auxiliary probe is allowed to interact with the sample, thereby encoding information about \(T\) in the probe state. Because temperature is not an observable in quantum mechanics, its value is inferred from the temperature-dependent measurement statistics of a suitably chosen observable.

Let the encoded probe state be denoted by $\rho_S(T)$. In quantum mechanics, any measurement is represented by a set of operators $\{\Pi_\theta\}$, satisfying $\Pi_\theta\geq 0$ and $\sum_{\theta} \Pi_\theta=\mathbb{I}_d$. Here, $\mathbb{I}_d$ denotes the identity operator on a $d$-dimensional space. The values $\theta$ in the subscript denote the possible measurement outcomes obtained, with each outcome $\theta$ associated with a probability $p_{T}(\theta)\coloneqq\Tr(\Pi_\theta \rho_S(T))$, given by Born rule~\cite{Born1926}. 
Once the measurement outcomes $\{\theta\}$ and the associated probability distribution $p_{T}(\theta)$ have been obtained, an estimator function $\hat{T}(\theta)$ is employed to obtain an estimate of the temperature from the data. Here we confine ourselves to estimators satisfying the condition of local unbiasedness at $T=T_0$: 
\begin{align}
   \int d\theta \hat{T}(\theta) p_{T_0}(\theta)=T_0, \; \int d\theta \hat{T}(\theta) \frac{dp_{T_0}(\theta)}{dT}\bigg \vert_{T_0}=1, \label{lub}
\end{align}
where $T_0$ represents the actual value of the temperature. Such estimators are referred to as locally unbiased estimators. 

The objective of quantum parameter estimation theory is to estimate the parameter of interest (here, the temperature) with the highest possible precision. This is achieved by minimizing the variance corresponding to all unbiased estimators and over all possible measurement settings. For a fixed probe state, the optimal precision of any unbiased estimator is then constrained by the quantum Cram\'er-Rao bound.~\cite{PhysRevD.23.357,PhysRevD.23.357,PhysRevLett.72.3439,holevo},
\begin{align}
    \Delta^2 \hat{T}\geq\frac{1}{NF_Q},\label{qcrb}
\end{align}
where $F_Q$ is the quantum Fisher information (QFI)~\cite{PhysRevLett.72.3439,BRAUNSTEIN1996135,doi:10.1142/S0219749909004839,Liu_2020}, defined by~\cite{Liu_2020}
\begin{equation}
F_Q(\rho_S(T))=2\sum_{\substack{i,j=0 \\ \lambda_i+\lambda_j \neq 0}}^{d-1}\frac{|\bra{\lambda_i}\frac{d\rho_S(T)}{dT}\ket{\lambda_j}|^2}{\lambda_i+\lambda_j}.
    \label{qfi}
\end{equation}
In our work, we consider single-shot scenario i.e., $N=1$.
\subsection{Quantum Maps for Encoding}\label{ncptp}

Let $\mathcal{D}(\mathcal{H}^d)$ denote the set of all positive semi-definite, unit-trace linear operators acting on a $d$-dimensional Hilbert space $\mathcal{H}^d$. Each element of $\mathcal{D}(\mathcal{H}^d)$ represents a quantum state.

 A quantum map $\mathcal{N}: \mathcal{D}(\mathcal{H}^d) \rightarrow\mathcal{D}(\mathcal{H}^{d'})$ takes a quantum state $\rho_S \in \mathcal{D}(\mathcal{H}^d)$ belonging to a system $S$ as input and produces
a quantum state $\rho'_{S'} \in \mathcal{D}(\mathcal{H}^{d'})$ corresponding to another system $S'$ as output, i.e., $\rho'_{S'} \coloneqq \mathcal{N} (\rho_S)$. From the definition of $\mathcal{N}$, it follows that $\mathcal{N}(\rho_S) \geq 0$ and  $\Tr[\mathcal{N}(\rho_S)]=\Tr(\rho_S)$, therefore $\mathcal{N}$ is positive and trace-preserving. Furthermore,  $\mathcal{N}$ is completely positive and trace-preserving (CPTP) if the extended map $\mathcal{N} \otimes \mathbbm{I}_{R} \geq 0$ for any $R \in \mathbb{N}$. Here, $\mathbbm{I}_{R}$ is the identity operator on an $R$-dimensional Hilbert space. 

Any completely positive trace-preserving map acting on a system \(S\) initially in the state \(\rho_S\) can be realized by introducing an external auxiliary (or environment) system \(E\) prepared in a state \(\rho_E\). Assuming that the system and the environment are initially uncorrelated, i.e., in the product state \(\rho_S \otimes \rho_E\), a global unitary operation \(U_{SE}\) is applied to the composite system. Tracing out the environmental degrees of freedom then induces a CPTP map~\cite{k,Alicki_Lendi} on the system, given by
\begin{equation}
\mathcal{N}_{\mathrm{CPTP}}(\rho_S)
:= \operatorname{Tr}_E \!\left[ U_{SE} \left( \rho_S \otimes \rho_E \right) U_{SE}^\dagger \right]. \label{qmap-cptp}
\end{equation}

In a general and realistic scenario, the system may carry some correlations with the environment. Suppose that the system interacts with an environment $E$ initially, with the initial joint state given by $\rho_{SE}$. The time evolution of the joint system $SE$ is described by a global unitary operator $U_{SE}$. The time evolution of $S$ alone is given by 
\begin{equation}  \mathcal{N}_\mathrm{G}(\rho_S)\coloneqq \Tr_E [U_{SE} \rho_{SE} U_{SE}^\dagger],
\label{qmap-gen}
\end{equation}
 The quantum correlations, including entanglement between the system and the environment can give rise to non-completely positive trace-preserving (NCPTP) maps~\cite{PhysRevLett.73.1060,PhysRevA.64.062106,PhysRevA.70.052110,ziman2006quantumprocesstomographyrole,PhysRevA.77.042113,R2008,wood2009noncompletelypositivemapsproperties}. Therefore, the set of general quantum maps \(\mathrm{G}\) can be written as \(\mathrm{G} = \mathrm{CPTP} \cup \mathrm{NCPTP}\), where the sets \(\mathrm{CPTP}\) and \(\mathrm{NCPTP}\) are disjoint.

\section{Present Framework for Quantum Thermometry}\label{th-gen}

In this section, we introduce our protocol for estimating the temperature \(T\) of an arbitrary-dimensional thermal environment \(E\) which is initially in a thermal state \(\tau_E \coloneqq \frac{\exp(-H_E/T)}{\Tr[\exp(-H_E/T)]}\). Here $H_E$ is the Hamiltonian of the environment. The temperature is encoded into a probe system \(S\), also of arbitrary dimension, through a global unitary interaction \(U_{SE}\) between the probe and the environment. Here, we consider three types of encoding processes. In the CPTP encoding, the probe is initially prepared in a state \(\rho_S\) that is independent of \(T\). In contrast, for NCPTP encodings, the initial probe state is correlated with the environment. We further classify the NCPTP encodings into two types:
\begin{enumerate}
    \item Type-I NCPTP encoding: the initial probe-environment joint state \(\rho_{SE}\) is a pure entangled state.
    \item Type-II NCPTP encoding: the initial probe-environment joint state \(\rho_{SE}\) is an arbitrary correlated state.
\end{enumerate}
The probe-environment initial states in both types of NCPTP encodings are chosen such that the reduced state of the environment is the thermal state $\tau_E$. The encoded probe state \(\rho_S(T)\) is then obtained via the action of the quantum map in Eq.~\eqref{qmap-cptp} for the CPTP encoding, and the quantum map in Eq.~\eqref{qmap-gen} for the two NCPTP encodings.

From metrological considerations, the precision in the estimation is proportional to the QFI of $\rho_S(T)$.  Previous studies demonstrated thermometric advantage in non-equilibrium probe states in comparison with the equilibrium (thermal) state of probe using CPTP encoding~\cite{Sekatski2022optimal,PhysRevA.108.062421,e26070568,PhysRevE.110.024132,trombetti2025nonequilibriumthermometryensembleinitially,prs2-jcxj,saha2026precisionenhancementtransientquantum}, our objective here is to check whether NCPTP encodings can provide quantum Fisher information greater than that of the thermal environment as compared to the CPTP encoding.



The following lemma shows the existence of an upper limit in the thermometric precision under the CPTP encoding.


\begin{lemma}\label{lem1}
Consider an environment $E$ of arbitrary dimension $d_E$, initially prepared in a thermal state $\tau_E$, and a probe system $S$ of arbitrary dimension $d_S$, initialized in a state $\rho_S$. The QFI of the probe, encoded by any CPTP encoding scheme, is upper bounded by that of the thermal state $\tau_E$, i.e.,
\begin{equation}
F_Q\big(\rho_S(T)\big) \leq F_Q(\tau_E).
\end{equation}
\end{lemma}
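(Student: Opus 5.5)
The plan is to realize the CPTP encoding as a composition of three operations whose only temperature dependence enters through \(\tau_E\), and then apply the monotonicity of the QFI under parameter-independent CPTP maps. The probe state \(\rho_S\) does not depend on \(T\), so appending it to the environment is itself a \(T\)-independent CPTP map, \(\mathcal{A}(X) \coloneqq \rho_S \otimes X\). By Eq.~\eqref{qmap-cptp}, the encoded state is
\begin{equation}
\rho_S(T) = \operatorname{Tr}_E\!\left[ U_{SE}\, \mathcal{A}(\tau_E)\, U_{SE}^\dagger \right] \eqqcolon \Lambda(\tau_E),
\end{equation}
where \(\Lambda = \operatorname{Tr}_E \circ\, \mathcal{U}_{SE} \circ \mathcal{A}\). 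Each of the three components is CPTP and none depends on \(T\), so \(\Lambda\) is a \(T\)-independent CPTP map from \(\mathcal{D}(\mathcal{H}^{d_E})\) to \(\mathcal{D}(\mathcal{H}^{d_S})\).

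The main step is to invoke the data-processing (monotonicity) property of the QFI: for any CPTP map \(\Lambda\) that does not depend on the parameter, \(F_Q(\Lambda(\rho(T))) \leq F_Q(\rho(T))\). Applied to the map above, this gives \(F_Q(\rho_S(T)) \leq F_Q(\tau_E)\) immediately.

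For a self-contained argument, one can check the inequality link by link. First, appending a fixed state gives \(F_Q(\rho_S\otimes\tau_E)=F_Q(\tau_E)\), by the additivity of QFI over tensor products, since \(\rho_S\) carries zero Fisher information. Second, QFI is invariant under the \(T\)-independent unitary \(U_{SE}\), because eigenvalues are preserved and the eigenvectors are merely rotated in Eq.~\eqref{qfi}. Third, QFI cannot increase under the partial trace.

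The third link is the only nontrivial one, and I would not rederive it from the eigen-decomposition formula Eq.~\eqref{qfi}. Instead, I would use the standard characterization of the QFI as the maximum of the classical Fisher information over all POVMs. Any POVM \(\{\Pi_\theta\}\) on \(S\) lifts to the POVM \(\{\Pi_\theta\otimes\mathbb{I}_E\}\) on \(SE\) and produces identical outcome statistics. The optimum over measurements on \(SE\) is therefore at least the optimum over measurements on \(S\) alone, which is exactly the inequality for the partial trace. This is also the same monotonicity theorem used above, so I would simply cite it from the standard QFI literature.

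Chaining the three links yields \(F_Q(\rho_S(T)) \leq F_Q(\tau_E)\) for every probe dimension, every probe state \(\rho_S\), and every global unitary \(U_{SE}\), which proves the lemma.
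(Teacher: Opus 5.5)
Your proof is correct and follows essentially the same route as the paper, which chains exactly your three links: additivity with $F_Q(\rho_S)=0$, monotonicity of the QFI under the partial trace, and invariance under the $T$-independent unitary $U_{SE}$. Your packaging of these as a single $T$-independent CPTP map $\Lambda=\operatorname{Tr}_E\circ\,\mathcal{U}_{SE}\circ\mathcal{A}$, which makes the additivity step unnecessary, and your justification of the partial-trace step by lifting POVMs $\Pi_\theta\mapsto\Pi_\theta\otimes\mathbb{I}_E$ are valid minor refinements of the paper's argument.
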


Here $F_Q\big(\rho_S(T)\big)$ and $F_Q(\tau_E)$ refer to the quantum Fisher information corresponding to the encoded state $\rho_S(T)$ and the thermal state, $\tau_E$ respectively.

\begin{proof}
The initial joint state of the probe and the environment is given by $\rho_S \otimes \tau_E$.
Since $\rho_S$ is independent of the temperature $T$, the additivity of QFI under tensor products implies
\begin{equation}
   F_Q(\rho_S \otimes \tau_E) = F_Q(\rho_S) + F_Q(\tau_E) = F_Q(\tau_E). \label{fish1} 
\end{equation}

For a CPTP encoding, the encoded probe state $\rho_S(T)$ is obtained from Eq.~\eqref{qmap-cptp}, with $\rho_E=\tau_E$. Noting that the partial trace over $E$ is a CPTP map over the composite system $SE$~\cite{lidar2020lecturenotestheoryopen} and that the QFI is monotonically non-increasing under CPTP maps~\cite{Liu_2020}, we have 
\begin{align}
F_Q(\rho_S(T))
&= F_Q\!\left(\operatorname{Tr}_E \!\left[ U_{SE} \left( \rho_S \otimes \tau_E \right) U_{SE}^\dagger \right]\right) \nonumber \\
&\leq F_Q\!\left(U_{SE} \left( \rho_S \otimes \tau_E \right) U_{SE}^\dagger \right). \label{fish2} 
\end{align}

Here, the map $\mathcal{N} = \operatorname{Tr}_E$ is a CPTP map acting on the joint state $U_{SE} \left( \rho_S \otimes \tau_E \right) U_{SE}^\dagger$, in contrast to the CPTP map $\mathcal{N}_{\mathrm{CPTP}}$ that defines the encoding on the system alone.

Further, the QFI is invariant under temperature-independent unitary transformations~\cite{Liu_2020}, which implies
\begin{equation}
    F_Q\!\left(U_{SE} \left( \rho_S \otimes \tau_E \right) U_{SE}^\dagger \right)
= F_Q(\rho_S \otimes \tau_E). \label{fish3} 
\end{equation}

Combining Eqs.~\eqref{fish1}--\eqref{fish3} yields
\begin{equation}
    F_Q(\rho_S(T)) \leq F_Q(\tau_E).
\end{equation}

Thus, under any CPTP encoding, the QFI of the encoded probe state is upper bounded by that of the thermal environment.
\end{proof}
\textcolor{black}{Therefore, the optimal precision obtainable with CPTP maps is limited to the thermal-state value. A crucial question that arises at this point is whether NCPTP maps can enhance the QFI, thereby achieving a precision that is able to surpass this bound. } 

\textcolor{black}{We now examine the effect of the NCPTP encodings on the QFI of the encoded probe state. We begin by considering the type-I NCPTP encoding, for which the initial joint state is taken to be a purification of the thermal state of the environment, with the probe serving as the purifying subsystem. The following proposition establishes an upper bound on the achievable precision in this case.} 
\begin{proposition}
Consider the type-I NCPTP encoding on the probe, realized through an initial probe-environment joint state \(\ket{\Psi}_{SE}=\sum_i \sqrt{p_i}\,\ket{e_i}_S\ket{i}_E\), where the probe and the environment have dimensions \(d_S\) and \(d_E\), respectively, with \(d_S \ge d_E\), such that the reduced state of the environment is the thermal state \(\tau_E=\sum_i p_i \ketbra{i}{i}_E\). Then, the QFI of the encoded probe state \(\rho_S(T)\) is upper bounded by that of \(\tau_E\).
\end{proposition}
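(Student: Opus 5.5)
The plan is to reduce the statement to the same data-processing argument used in Lemma~\ref{lem1}, with the product state $\rho_S\otimes\tau_E$ replaced by the pure state $\ket{\Psi}_{SE}$. The key observation is that the joint state $\ket{\Psi}_{SE}$ carries \emph{exactly} as much information about $T$ as $\tau_E$ does. This holds under the implicit assumption that the probe vectors $\{\ket{e_i}_S\}$ and the energy eigenbasis $\{\ket{i}_E\}$ of $H_E$ are fixed and $T$-independent, so that only the weights $p_i=p_i(T)=e^{-E_i/T}/Z$ carry the temperature. The condition $d_S\ge d_E$ only guarantees that $d_E$ orthonormal vectors $\ket{e_i}_S$ exist, so that this purification is available.

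\textbf{Step 1: QFI of the joint pure state.} For a pure state the QFI is $F_Q=4\big(\langle\partial_T\Psi|\partial_T\Psi\rangle-|\langle\Psi|\partial_T\Psi\rangle|^2\big)$. Because the vectors $\ket{e_i}_S\ket{i}_E$ are orthonormal and $T$-independent, we have
\begin{equation*}
\ket{\partial_T\Psi}=\sum_i \frac{\partial_T p_i}{2\sqrt{p_i}}\,\ket{e_i}_S\ket{i}_E .
\end{equation*}
This gives
\begin{equation*}
\langle\partial_T\Psi|\partial_T\Psi\rangle=\sum_i\frac{(\partial_T p_i)^2}{4p_i},
\qquad
\langle\Psi|\partial_T\Psi\rangle=\tfrac12\partial_T\sum_i p_i=0 .
\end{equation*}
Hence $F_Q(\ket{\Psi}\!\bra{\Psi}_{SE})=\sum_i(\partial_T p_i)^2/p_i$, which is the classical Fisher information of the Gibbs distribution $\{p_i\}$.

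\textbf{Step 2: identify this with $F_Q(\tau_E)$.} Since $\tau_E=\sum_i p_i\ketbra{i}{i}_E$ is diagonal in a $T$-independent basis, Eq.~\eqref{qfi} has only diagonal contributions, namely $2(\partial_T p_i)^2/(2p_i)$. Therefore $F_Q(\tau_E)=\sum_i(\partial_T p_i)^2/p_i$, and
\begin{equation*}
F_Q(\ket{\Psi}\!\bra{\Psi}_{SE})=F_Q(\tau_E).
\end{equation*}
This equality plays the role that additivity, Eq.~\eqref{fish1}, played in Lemma~\ref{lem1}.

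\textbf{Step 3: apply the encoding.} I will repeat Eqs.~\eqref{fish2}--\eqref{fish3} verbatim. The encoded state is $\rho_S(T)=\Tr_E[U_{SE}\ket{\Psi}\!\bra{\Psi}U_{SE}^\dagger]$. The QFI is invariant under the $T$-independent unitary $U_{SE}$, and it does not increase under the partial trace, which is a CPTP map on $SE$. Together these give
\begin{equation*}
F_Q(\rho_S(T))\le F_Q(\ket{\Psi}\!\bra{\Psi}_{SE})=F_Q(\tau_E).
\end{equation*}
Note that although the induced map on $S$ is NCPTP, the monotonicity is applied to the CPTP map $\Tr_E$ on the joint system, so there is no conflict.

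\textbf{Main obstacle.} The delicate point is Step 1. Everything hinges on the purifying vectors $\ket{e_i}_S$ and the phases of the purification being $T$-independent. If they were allowed to depend on $T$, extra terms $\sqrt{p_i}\ket{\partial_T e_i}$ would appear and the bound could fail. I would state this assumption explicitly, since it is the physically natural one: the probe preparation cannot know $T$. Under it, zero-weight levels $p_i=0$ contribute nothing, and strictly positive Gibbs weights avoid any support issues.
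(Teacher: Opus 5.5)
Your proposal is correct and follows essentially the same route as the paper. You make the same assumption that only the weights $p_i$ depend on $T$, show that $F_Q(\ket{\Psi}\!\bra{\Psi}_{SE})=\sum_i \dot p_i^2/p_i=F_Q(\tau_E)$, and then bound $F_Q(\rho_S(T))$ using unitary invariance and monotonicity of the QFI under $\Tr_E$ on the joint system. The only difference is the order of presentation: the paper applies the data-processing step first and computes the two QFIs afterwards, which changes nothing in the argument.
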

\begin{proof}
The initial joint state of the probe and the environment is given by $\rho_{SE}=\ketbra{\Psi_{SE}}{\Psi_{SE}}$. We assume that the only temperature dependence is contained in the coefficients $\{p_i\}$, while the basis vectors $\{\ket{e_i}_S\}$, $\{\ket{i}_E\}$ are independent of $T$.
    By similar arguments as presented in lemma~\ref{lem1}, the QFI of the encoded probe state $\rho_S(T)$ satisfies
\begin{align}
F_Q(\rho_S(T))
&= F_Q\!\left(\operatorname{Tr}_E \!\left[ U_{SE} \rho_{SE} U_{SE}^\dagger \right]\right) \nonumber \\
&\leq F_Q\!\left(U_{SE} \rho_{SE}  U_{SE}^\dagger \right)=F_Q(\rho_{SE}).\nonumber 
\end{align}
Now, the QFI for a pure encoded state $\rho_{SE}$ is given by
\begin{align} F_Q(\rho_{SE})&=4\left(\langle \partial_T\Psi|\partial_T\Psi \rangle_{SE} - |\langle \partial_T\Psi|\Psi \rangle_{SE}|^2\right),\nonumber\\
&=4\left(\sum_{i}\frac{\dot{p}^2_i}{4p_i}-\frac{1}{4}\left|\sum_{i}\dot{p}_i\right|^2\right)=\sum_{i}\frac{\dot{p}^2_i}{p_i},\label{fish_ncp2}
\end{align}
where $\dot{p}_i=dp_i/dT$, and we have used $\textstyle \sum_i \dot p_i = \frac{d}{dT}\textstyle \sum_i p_i = 0$, which follows from the normalization condition $\textstyle \sum_{i}p_i=1$. 

Using Eq.~\eqref{qfi}, the QFI of the state $\tau_E$ is given by
\begin{align}
    F_Q(\tau_E)
&= 2 \sum_{i,j} \frac{\left|\langle i | \partial_T \tau_E | j \rangle\right|^2}{p_i + p_j}= 2 \sum_{i,j,k} \frac{|\dot{p}_k\langle i | k\rangle \langle k | j \rangle|^2}{p_i + p_j},\nonumber\\&=2 \sum_{i,j} \frac{|\dot{p}_i\delta_{ij}|^2}{p_i+p_j}
=2 \sum_{i,j} \frac{\dot{p}^2_i\delta_{ij}}{p_i+p_j}\nonumber
\\&=\sum_{i}\frac{\dot{p}^2_i}{p_i},\label{fish_ncp3} 
\end{align}
where we have used $\partial_T\tau_E = \sum_{k} \dot{p}_k\ketbra{k}{k}$.
Comparing Eqs.~\eqref{fish_ncp2} and \eqref{fish_ncp3} shows that $F_Q(\rho_{SE})=F_Q(\tau_E)$. This establishes that $F_Q(\rho_S(T)) \leq F_Q(\tau_E)$. 
\end{proof}
\color{black}
In practical scenarios, the dimension of the probe is typically less than or equal to that of the environment. We can therefore consider \(d_S=d_E\).

{Since the maximum achievable QFI is the same for the CPTP and type-I NCPTP encodings, we conclude that additional resources do not enhance the metrological sensitivity when the probe-environment joint state is restricted to be pure. Therefore, a resource-efficient CPTP map suffices in this case.}  This motivates us to explore the broadest scenario possible, i.e., arbitrary correlated states $\rho_{SE}$, constrained only by the requirement that the reduced state of the environment be thermal. The resulting encoding corresponds to the type-II NCPTP encoding process. Using the invariance of the QFI under parameter-independent unitary operations, one immediately identifies an upper bound for this case as well:
\begin{equation}
    F_Q(\rho_{S}(T)) \leq F_Q(\rho_{SE}(T)). \label{upp-ncptp-ii}
\end{equation}
However, \(\rho_{SE}\) can be an arbitrary state with a thermal marginal in \(E\), making further analytical simplification difficult.

We therefore turn to a numerical analysis. To this end, we consider both the probe and the environment to be qubits. The local Hamiltonian of each is taken to be the Pauli-\(z\) operator, \(\sigma_z\). Consequently, the total Hamiltonian of the composite system \(SE\) is given by
\begin{equation}
    H_T=\sigma_z \otimes \mathbb{I}_2+\mathbb{I}_2 \otimes \sigma_z. \label{total-hamiltonian}
\end{equation}

Our aim is to compute the optimal QFI under each encoding process for this Hamiltonian. 
Note that the temperature can be expressed in the units of $\hbar\delta/{k_B}$, where $\hbar$ and $k_B$ denote the reduced Planck constant and the Boltzmann constant, respectively, and $\delta$ sets the energy scale. In this work, we set $\hbar=k_B=1$. Throughout the paper, we use the dimensionless temperature $T/{\delta}$, which we continue to denote by $T$ for notational convenience. As a consequence, the QFI is also dimensionless.


\section{Thermometry considering general two-qubit unitaries} \label{gen}
In this section, we compute the optimal QFI for each encoding process for arbitrary two-qubit encoding unitaries. A convenient way to parameterize such unitaries is given by the Kraus-Cirac decomposition~\cite{khaneja2000cartandecompositionsu2nconstructive,PhysRevA.63.062309}
\begin{equation} \label{global-u}
    U_{SE}=U_A \otimes U_B U_d V_A \otimes V_B,
\end{equation}
where $U_A,U_B,V_A, V_B \in SU(2)$. We set $U_A=W_1,U_B=W_2,V_A=W_3,V_B=W_4$, where $W_k \in SU(2), k=1,2,3,4$ have the following form
\begin{equation*}
W_k \coloneqq
\begin{pmatrix}
\cos\frac{\theta_k}{2}\, e^{\tfrac{i}{2}(\psi_k+\nu_k)}
&
\sin\frac{\theta_k}{2}\, e^{-\tfrac{i}{2}(\psi_k-\nu_k)}
\\[6pt]
-\sin\frac{\theta_k}{2}\, e^{\tfrac{i}{2}(\psi_k-\nu_k)}
&
\cos\frac{\theta_k}{2}\, e^{-\tfrac{i}{2}(\psi_k+\nu_k)}
\end{pmatrix}.
\end{equation*}
The range of parameters are given by $\theta_k \in [0,\pi]$, $\nu_k \in [0,2\pi]$ and $\psi_k \in [0,4\pi)$. The nonlocal part $U_d$ can be represented as
\begin{equation}
    U_d=\exp(-i\alpha_x \sigma_x \otimes \sigma_x-i\alpha_y \sigma_y \otimes \sigma_y-i\alpha_z \sigma_z \otimes \sigma_z),
\end{equation}
with $(\sigma_x, \sigma_y)$ being Pauli x- and y-matrices respectively.
The range of the parameters $(\alpha_x,\alpha_y,\alpha_z)$ can be restricted to $[0,\pi/2]$. 

\begin{figure*}
    \centering
    \hspace{-0.5cm}
    \includegraphics[scale=0.45]{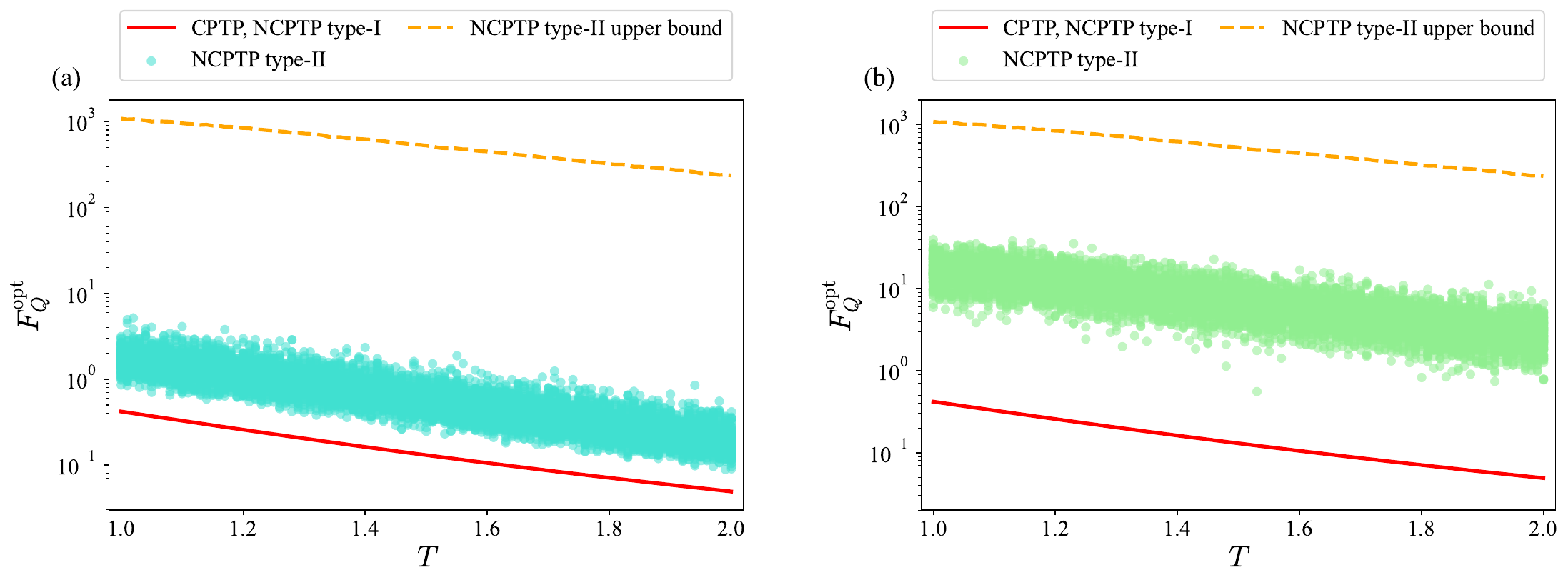}
    \caption{\textbf{Variation of the optimal QFI of the encoded probe state for (a) arbitrary unitaries  and (b) energy-conserving unitaries, with the environment temperature, under the CPTP and the NCPTP encoding processes. } Here, the probe and the environment are both taken as qubits. Across both subplots, the optimal QFI is plotted as a function of the environment temperature $T \in [1.0,2.0]$ in steps of $0.01$. Temperature is plotted along the horizontal axis, whereas the optimal QFI is plotted along the vertical axis, in logarithmic scale. The values for the CPTP and the type-I NCPTP encodings are the same for arbitrary unitaries, as well as energy-preserving unitaries, and equals the QFI of the thermal state of the environment. For each value of $T$, the data shown for the type-II NCPTP encoding are obtained from $100$ separate initial joint states. We observe that the optimal QFI for the type-II NCPTP encoding attain significantly higher values as compared to those for the CPTP and the type-I NCPTP encodings. The plotted upper bound for the NCPTP type-II encoding is computed by averaging over $10$ independent runs of the optimization algorithm. The optimal QFI decreases as $T$ increases for all cases. The axes in all the plots are dimensionless. The colors of curves and markers corresponding to each panel are provided in legend.}
    \label{fig:1}
\end{figure*}

\subsection{Quantum Thermometry under CPTP encoding}\label{cp}
For the CPTP encoding, we consider the state of initial two-qubit probe-environment composite system as $\rho_{SE}=\rho_S \otimes \tau_E$, where $\tau_E$ is the thermal state of the environment given by,
\begin{equation}
    \tau_E \coloneqq \frac{\exp(-\frac{\sigma_z}{T})}{\Tr(\exp(-\frac{\sigma_z}{T}))}=(1-r) \ketbra{0}{0}+r\ketbra{1}{1}. \label{thermal}
\end{equation}

 Here, $r\coloneqq\frac{\exp(1/T)}{\exp(-1/T)+\exp(1/T)}$ and $(\ket{0},\ket{1})$ are the eigenvectors of $\sigma_z$ corresponding to eigenvalues $1$ and $-1$ respectively. The initial probe state can be written in Bloch sphere representation,
\begin{equation}
    \rho_S=\frac{1}{2}(\mathbb{I}_2+n_x\sigma_x+n_y\sigma_y+n_z\sigma_z), \label{bloch}
\end{equation}
where $n_x,n_y,n_z \in \mathbb{R}$ satisfy $n_x^2+n_y^2+n_z^2 \leq 1$. The probe and the environment then interact via the two-qubit global unitary 
$U_{SE}$, described in Eq.~\eqref{global-u}. We subsequently compute the 
quantum Fisher information (QFI) of the encoded probe state 
$\rho_S(T)=\mathcal{N}_{\text{CPTP}}(\rho_S)$ (Eq.~\eqref{qmap-cptp}) and 
maximize it by numerically optimizing over all possible two-qubit unitaries 
$U_{SE}$ and qubit probe states $\rho_S$. The 
derivative $d\rho_S(T)/{dT}$ appearing in Eq.~\eqref{qfi} during the evaluation of the QFI is 
numerically approximated using the five-point midpoint formula, given by
\begin{align}
    \frac{d\rho_S(T)}{dT}
=&
\frac{1}{12h}
\Big[
\rho_S(T-2h)
- 8\,\rho_S(T-h)\nonumber\\&
+ 8\,\rho_S(T+h)
- \rho_S(T+2h)
\Big]. \label{5-pt}
\end{align}
Here, we have considered $h=0.001$. The numerical Optimization is carried out using NLOPT~\cite{NLopt}. We consider the environmental temperature $T$ in the range $[1.0,2.0]$ with a step size of $0.01$. For all values of $T$, we find that the optimal QFI coincides with the QFI of the corresponding thermal state of the environment. 


\subsection{Quantum Thermometry under NCPTP encodings}\label{ncp}

The initial 
joint state for the type-I NCPTP encoding is taken to be the pure entangled state 
$\rho_{SE}=\ketbra{\Psi_{SE}}{\Psi_{SE}}$, with
\begin{equation}
    \ket{\Psi_{SE}}=\sqrt{(1-r)}\ket{00}+\sqrt{r}\ket{11}.
\end{equation}
It can be easily verified that $\Tr_S \rho_{SE}=\tau_E$, which coincides with 
the thermal state considered in the CPTP setting. {We also take into account the class of NCPTP maps that can be 
generated by local unitary rotations of $\ket{\Psi_{SE}}$, while keeping the 
marginal state of the environment thermal. For this purpose, we restrict ourselves 
to local unitaries of the form $U_1 \otimes \mathbb{I}_2$, with $U_1 \in SU(2)$ given by}

\begin{equation*}
U_1
=
\begin{pmatrix}
e^{i\delta}\cos\gamma & e^{i\beta}\sin\gamma \\
-\,e^{-i\beta}\sin\gamma & e^{-i\delta}\cos\gamma
\end{pmatrix},
\end{equation*}
where $\beta, \gamma, \delta \in [0,2\pi]$. The encoded state of the probe is then given by
\begin{equation}  \rho_S(T)=\Tr_E [U_{SE} (U_1 \otimes \mathbb{I}_2) \rho_{SE} (U^\dagger_1 \otimes \mathbb{I}_2) U_{SE}^\dagger], \label{en-con-probe}
\end{equation}
Computing the QFI of this state by optimizing over all choices of $U_{SE}$ and the local unitaries $U_1$ for same set of temperature values as before, we find that the maximum QFI equals $F_Q(\tau_E)$ across all values of temperature and decreases with increasing temperature.

For type-II NCPTP encoding, we need to consider an arbitrary two-qubit state, with the thermal marginal. A general two-qubit density matrix can be decomposed as
\begin{equation}
\rho_{SE}
=
\frac{1}{4}\Bigl(
\mathbb{I}_2\otimes\mathbb{I}_2
+
\vec{a}\cdot\vec{\sigma}\otimes\mathbb{I}_2
+
\mathbb{I}_2\otimes\vec{b}\cdot\vec{\sigma}
+
\sum_{i,j=1}^3 c_{ij}\,\sigma_i\otimes\sigma_j
\Bigr), \label{gen-in-state}
\end{equation}
where all components of $\vec{a}, \vec{b}$ and $C=\{c_{ij}\}$ lie within the interval $[-1,1]$. They must be chosen suitably so as to make $\rho_{SE}$ positive semi-definite, as required for any valid density matrix. 

In our case, we need to find the conditions which makes the reduced environment state a thermal state. Tracing out the probe degree of freedom $\rho_{SE}$ gives us 
\begin{align*}
    \rho_E&\coloneqq\Tr_S\rho_{SE}=\frac{1+b_z}{2}\ketbra{0}{0}+\frac{b_x-ib_y}{2}\ketbra{0}{1}\\&+\frac{b_x+ib_y}{2}\ketbra{1}{0}+\frac{1-b_z}{2}\ketbra{1}{1},
\end{align*}
where $\vec{b}=(b_x,b_y,b_z)$. Comparing this with the form of the thermal state $\tau_E$, we find that $\rho_E=\tau_E$ for the following choice of parameters: $b_x=b_y=0$ and $b_z=1-2r$. Since $r$ depends only on the eigenvalues of $\sigma_z$ and temperature, $b_z$ is also fixed for a given $T$. The remaining 12 parameters may, in principle, be functions of \(T\), although determining their functional dependence is not straightforward. Therefore, for the sake of simplicity, we restrict our attention to a subset of two-qubit states \(\rho_{SE}\) for which these parameters are independent of \(T\). Thus, we need to optimize over $12$ parameters of the state, rather than $15$, in addition to the parameters of the global unitary.

Using these constraints on $\rho_{SE}$, we now compute the QFI of the encoded probe state. For each value of $T$, we show the existence of $100$ example initial joint states for which the QFI exceeds that corresponding to the CPTP and type-I NCPTP encoding. This implies that an enhancement in QFI can be achieved by enlarging the set of initial states to include arbitrary two-qubit correlated states.

The results for all three types of encodings are shown in Fig.~\ref{fig:1}(a). The horizontal and vertical axes depict $T$ and the optimal QFI, $F^\text{opt}_Q$ respectively, with the vertical axis plotted in logarithmic scale for improved visibility. The temperatures are selected from the interval $[1.0,2.0]$, in steps of $0.01$. The plot shows that an optimal QFI beyond $F_Q(\tau_E)$ can be obtained under the NCPTP type-II encoding. Also, we numerically optimize the upper bound of the type-II NCPTP encoding (Eq.~\eqref{upp-ncptp-ii}), over all possible states given by Eq.~\eqref{gen-in-state}, with $b_x=b_y=0$ and $b_z=1-2r$, and plot the value averaged over $10$ independent runs of the optimization algorithm. Since we restrict our analysis to a subclass of the full family of states \(\rho_{SE}(T)\), with the temperature dependence entering only through the coefficient \(b_z\), the upper bound is not reached in this case, although the values are still substantially larger compared to $F_Q(\tau_E)$.
For all cases, optimal QFI decreases with increasing $T$.

\section{Thermometry using two-qubit energy-preserving unitaries}\label{en}
So far, we have considered the full set of two-qubit unitaries acting on the composite probe-environment system for encoding the probe state. Lemma~\ref{lem1} immediately implies that, when the probe and environment have the same dimension, the SWAP gate \(U_{\mathrm{SWAP}}\) is one of the optimal unitaries for the CPTP encoding. For qubit probe and environment, \(U_{\mathrm{SWAP}}\) commutes with the total Hamiltonian \(H_T\) in Eq.~\eqref{total-hamiltonian}, and is therefore energy conserving\footnote{A unitary $U^{C}_{SE}$ is energy-conserving if it commutes with the total Hamiltonian $H_T$ of the probe-environment composite, i.e., $[U^C_{SE},H_T]=0$.}. This motivates us to examine the subclass of energy-conserving unitaries, which are physically well motivated in quantum thermodynamics, where they are used to define thermal operations~\cite{Janzing2000,rf,Horodecki_2013,PhysRevLett.111.250404}, the free operations in the resource-theoretic formulation of quantum thermodynamics. We then ask whether the type 1 NCPTP map can provide any advantage in temperature estimation of the environment or the situation remains same as that of the arbitrary two-qubit unitary case (Where maps are sampled from arbitrary two qubit unitaries) and whether the metrological advantage of the type-II NCPTP encoding persists under this restriction.

 The energy-preserving condition implies that \(U^C_{SE}\) is diagonal in the eigenbasis of \(H_T\) when \(H_T\) has a nondegenerate spectrum, and block diagonal when degeneracies are present.
 For our choice of Hamiltonian $H_T$, its eigenbasis is simply the two-qubit computational basis $\{\ket{00},\ket{01},\ket{10},\ket{11}\}$, with $\ket{01}$ and $\ket{10}$ forming a twofold degenerate subspace corresponding to zero energy. Consequently, 
$U^C_{SE}$ can be decomposed as
\begin{align}
     U^C_{SE}&=e^{-i\lambda_1}\ketbra{00}{00}+e^{-i\lambda_2}\ketbra{\Phi_1}{\Phi_1}\nonumber\\&+e^{-i\lambda_3}\ketbra{\Phi_2}{\Phi_2}+e^{-i\lambda_4}\ketbra{11}{11}, \label{en-con-u}
\end{align}
where $e^{-i\lambda_1},e^{-i\lambda_2},e^{-i\lambda_3},e^{-i\lambda_4}$ are the phase factors, with $\lambda_1,\lambda_2,\lambda_3,\lambda_4\in [0,2\pi]$, and $\{\ket{\Phi_1},\ket{\Phi_2}\}$ is an arbitrary orthonormal basis spanning the degenerate subspace, given by
\begin{align}
    &\ket{\Phi_1}=\sin \lambda_5 \ket{10}+\cos \lambda_5 \ket{01},\nonumber\\&\ket{\Phi_2}=\cos\lambda_5 \ket{10}-\sin \lambda_5 \ket{01}, \label{degen}
\end{align}
 with $\lambda_5 \in [0,2\pi]$. We consider the three encoding schemes in this setting and optimize the QFI using the procedure discussed in~\ref{gen}, restricting the optimization to the set of energy-preserving unitaries $U^{C}_{SE}$.
 The optimal global unitary for the CPTP encoding is numerically found to have the following form:
\begin{equation}
  U^{C,\text{opt}}_{SE}=  \begin{pmatrix}
        e^{ia}& 0& 0& 0\\
        0& 0& e^{ib}& 0\\
        0& e^{ib}& 0& 0\\
        0& 0& 0& e^{ic}\\
    \end{pmatrix}, \label{opt-u-cp}
\end{equation}
where $a,b,c \in [0,2\pi]$. It is readily seen that $U^{C, \text{opt}}_{SE}$ is equivalent to $U_\mathrm{SWAP}$ up to phases. Plugging Eqs.~\eqref{opt-u-cp} and~\eqref{bloch} in Eq.~\eqref{qmap-cptp} gives us the form of the encoded probe state:
\begin{equation}
        \rho_S(T)=(1-r)\ket{0}\bra{0}+r\ket{1}\bra{1}.\nonumber
\end{equation}

Thus, the encoded probe state coincides exactly with the thermal state \(\tau_E\), and therefore the optimal QFI is \(F^\text{opt}_Q = F_Q(\tau_E)\).


Under type-I NCPTP encoding, the optimal probe unitary $U_1$ is numerically observed to be diagonal:
\begin{equation}
    U^\text{opt}_1=\begin{pmatrix}
        e^{ia} &0\\
        0& e^{-ia}
    \end{pmatrix}.
\end{equation}
Note that unlike~\ref{ncp}, local unitaries cannot be absorbed into the energy-conserving global unitaries involved in optimization, since $U_{SE} (U_1 \otimes \mathbb{I}_2)$ may not commute with $H_T$. 
The encoded state of the probe $\rho_S(T)$ is calculated by substituting the form of $U_1$ in Eq.~\eqref{en-con-probe}, and is found to be the thermal state (Eq.~\eqref{thermal}). 



Moving on to the type-II NCPTP encoding, we compute the optimal QFI for $100$ initial joint states, sampled randomly, throughout the temperature range. The results are depicted in Fig~\ref{fig:1}(b). The optimal QFI $F^\text{opt}_Q$ and the environment temperature $T\in [1.0,2.0]$ are plotted along the vertical and the horizontal axes respectively. The vertical axis is plotted in logarithmic scale as before. The temperatures are varied with a step size of $0.01$. We find that at any given temperature, optimal QFI values for each of the $100$ initial joint states corresponding to the type-II NCPTP encoding exceed that of the CPTP and type-I NCPTP encodings by a considerable margin. Consequently, the optimal QFI attainable under different encoding schemes exhibits a similar qualitative behavior for general unitaries, as well as energy-preserving unitaries.  

Further, the optimal values are higher than those obtained by optimizing over arbitrary two-qubit unitaries. This can be understood from the fact that energy-conserving unitaries form a small subset of the full set of two-qubit unitaries. Therefore, the probability of the former getting chosen while random sampling over the latter is negligibly small. The energy-conserving unitaries, however, can yield greater values of the QFI. Hence, restricting the sampling to this subset leads to higher optimal QFI.


As before, the optimized upper bound on the QFI attainable via the type-II NCPTP encoding is obtained by averaging over 10 independent optimization runs, and is plotted in the figure. The upper bound is not attained in this case either. Moreover, increasing $T$ decreases the optimal QFI.


\section{Conclusion} \label{conc}

In this work, we went beyond the conventional framework of quantum thermometry based on completely positive (CP) maps and analyzed how the introduction of non-completely positive (NCP) encoding can affect the precision of temperature estimation. In particular, we addressed the question of whether NCP encoding can outperform CP encoding and yield improved precision.

We answered this in the affirmative. Specifically, we investigated temperature estimation for an arbitrary-dimensional thermal environment under three types of encoding processes: completely positive trace-preserving (CPTP) maps and two classes of physically realizable non-completely positive trace-preserving (NCPTP) maps. The CPTP encoding is realized through initially uncorrelated probe–environment states, whereas type-I and type-II NCPTP encodings correspond to pure entangled and general correlated initial states, respectively. In all cases, the reduced state of the environment was constrained to be thermal, state corresponding to the temperature to be estimated.

We analytically established that, for both CPTP and type-I NCPTP encoding, the optimal quantum Fisher information (QFI) is upper bounded by the QFI of the thermal state of the environment. Furthermore, for a qubit probe interacting with a qubit environment, we showed that this bound is saturable. In contrast, under type-II encoding, we demonstrated that this bound can be significantly surpassed. In particular, for qubit probes and environments, we observed enhanced precision beyond the thermal-state limit when optimizing over general two-qubit correlated states and global unitaries. This enhancement persisted even when the set of global unitaries was restricted to physically relevant energy-conserving interactions.

Our results show that CPTP encoding, implemented using uncorrelated initial states, can achieve the same precision as type-I NCPTP encoding based on pure entangled states. However, allowing general correlated initial states, leading to type-II NCPTP encoding, provides a genuine precision advantage. Overall, our findings highlight the role of initial correlations and encoding structure in determining thermometric performance, and demonstrate that going beyond the CPTP paradigm can be beneficial for practical quantum metrology.

\section*{Acknowledgments}
We acknowledge the use of Armadillo and QIClib (\url{https://titaschanda.github.io/QIClib})
PC and DS acknowledge support from the ‘INFOSYS scholarship for senior students’ at Harish-Chandra Research Institute, India.

\appendix
\section{Illustrative Examples}
\begin{figure*}
    \centering
    \includegraphics[scale=0.48]{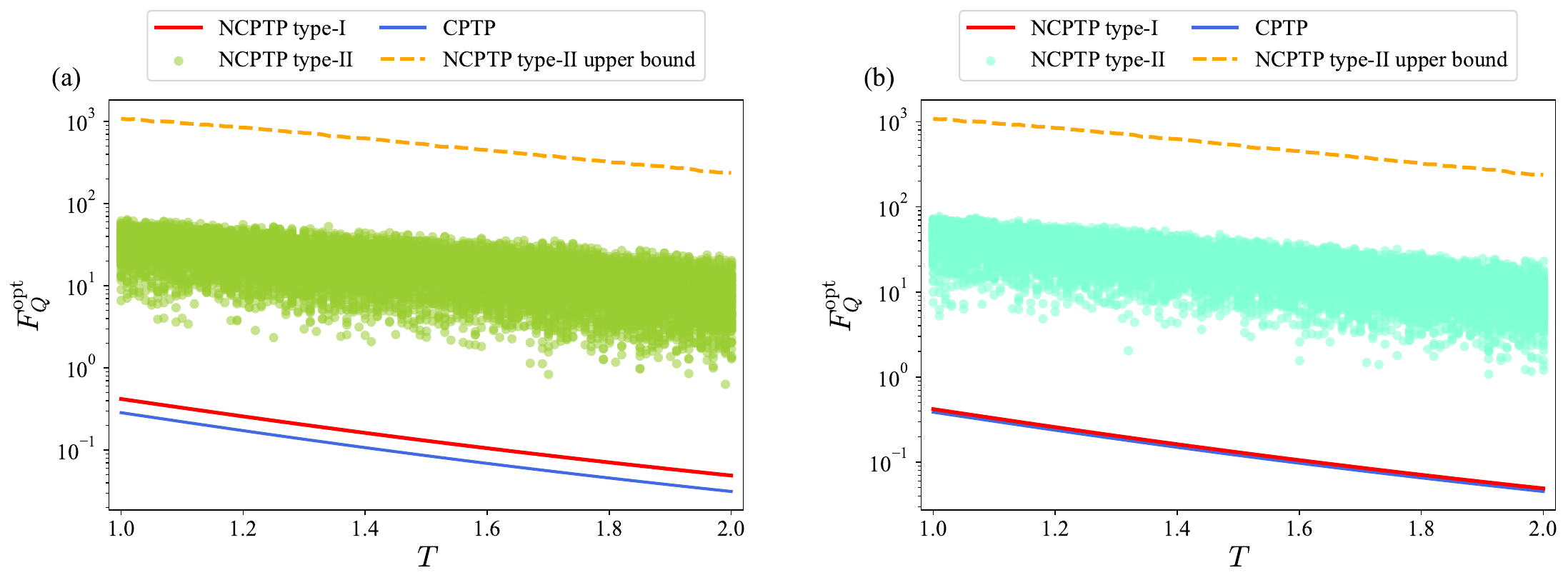}
    \caption{\textbf{Behaviour of optimal QFI of the encoded probe state, for global unitary generated via (a) two-qubit XX interaction  and (b) two-qubit anisotropic  XY interaction, with the environment temperature, corresponding to the CPTP and the NCPTP encodings. } The main plots in both figures represent the variation of the optimal QFI with environment temperature $T \in [1.0,2.0]$, with a step size of $0.01$. The horizontal axis denotes the temperature, while the vertical axis denotes the corresponding optimal QFI, on a logarithmic scale. The CPTP-optimal QFI values are higher in panel \textbf{(b)} than panel \textbf{(a)}; however, they remain below the type-I NCPTP optimum values in both panels, which equals the thermal-state QFI. For the type-II NCPTP encoding, the optimal QFI (corresponding to $100$ joint initial states for each $T$) reaches appreciably large values. The upper bound on the optimal QFI for type-II NCPTP encoding is plotted by averaging over $10$ independent runs of the optimization algorithm. An increase in the temperature reduces the optimal QFI, as observed for all cases. All parameters in the plots are dimensionless. The colors of curves and markers corresponding to each panel are provided in legend.}
    \label{fig:2} 
\end{figure*}
Here we present our analysis through examples of specific energy-preserving and energy non-preserving unitaries using a physical model. We consider a fixed global unitary $ U_{SE}=\exp[-i(H_T+H_{int})]$, where
\begin{align}
   H_{int}=J_x\sigma_x \otimes \sigma_x+J_y \sigma_y \otimes \sigma_y
\end{align}
describes the antiferromagnetic two-qubit XY model~\cite{LIEB1961407}. In general, $J_x \neq J_y$ , which corresponds to the anisotropic XY model. For $J_x=J_y=J$, one obtains the isotropic XY (XX) model, which can equivalently be written as
\begin{align}
   H_{int}=2J(\sigma_+\otimes\sigma_-+\sigma_-\otimes\sigma_+),
\end{align}
with $\sigma_{\pm}=\frac{\sigma_x\pm i\sigma_y}{2}$. 



Here we consider (a) the XX model with $2J=1$ units, and (b) the anisotropic XY model with $J_x=1, J_y=0.5$ units. It can easily be checked that the XX model (a) is energy-preserving with respect to $H_T$, whereas the same is not true for the anisotropic XY model (b). We compute the maximum QFI by optimizing over probe states for CPTP encoding, and over local unitaries acting on the probe, and the general two-qubit joint states for NCPTP encodings. For the CPTP encoding, the optimal QFI obtained numerically for both the XX and anisotropic XY models lies below the corresponding thermal-state QFI of the environment, indicating that the unitaries considered in these cases are not optimal for the CPTP encoding. However, the optimal QFI is higher for the anisotropic XY model than for the XX model. For type-I NCPTP encoding, both models give optimal QFI saturating the thermal state $\tau_E$. Lastly, the optimal QFI surpasses $F_Q(\tau_E)$ for both models, under the type-II NCPTP encoding. Thus, in these models, both types of NCPTP encodings are able to provide thermometric advantage over the CPTP encoding.

The results are presented in Fig.~\ref{fig:2}. The optimal QFI $F^\text{opt}_Q$ (vertical axis) is plotted as a function of the environment temperature $T$ (horizontal axis). The vertical axis is plotted using logarithmic scale. The values of $T$ are chosen from the interval $[1.0,2.0]$ with a spacing of $0.01$. Panels (a) and (b) respectively represent the XX and anisotropic XY models. For NCPTP type-II encoding, $100$ initial states $\rho_{SE}$ are considered for each $T$, and the optimal QFI values for all of them is found to exceed the corresponding CPTP and NCPTP type-I values. Furthermore, the optimal QFI values are spread over a wider range of values in the anisotropic XY model. In both panels, the optimal QFI falls as the temperature rises. We also plot the optimized upper bound on the QFI achievable under the type-II NCPTP encoding, averaged over 10 independent runs. The higher optimal QFI values obtained for the type-II NCPTP encoding in this case, relative to those achievable under general two-qubit unitaries or even general energy-preserving two-qubit unitaries, may be attributed to more effective sampling of the initial states, which in turn leads to higher estimation precision.

\bibliography{ref}
\end{document}